\newtheorem{proposition}{Proposition}
\newtheorem{theorem}{Theorem}
\newcommand{\muspace}{\mspace{1mu}}
\DeclareRobustCommand{\scond}{\mathchoice{\muspace\vert\muspace}{\vert}{\vert}{\vert}}
\newcommand{\cond}{\mathchoice{\,\vert\,}{\mspace{2mu}\vert\mspace{2mu}}{\vert}{\vert}}
\DeclareRobustCommand{\discint}{\mathchoice{\mspace{-1.5mu}:\mspace{-1.5mu}}{\mspace{-1.5mu}:\mspace{-1.5mu}}{:}{:}}
\newcommand{\suchthat}{\mathchoice{\colon}{\colon}{:\mspace{1mu}}{:}}
\newcommand{\Ac}{\mathcal{A}}
\newcommand{\Bc}{\mathcal{B}}
\newcommand{\Dc}{\mathcal{D}}
\newcommand{\Gc}{\mathcal{G}}
\newcommand{\Jc}{\mathcal{J}}
\newcommand{\Sc}{\mathcal{S}}
\newcommand{\Tc}{\mathcal{T}}
\newcommand{\Wc}{\mathcal{W}}
\newcommand{\Xc}{\mathcal{X}}
\newcommand{\Yc}{\mathcal{Y}}
\newcommand{\Zc}{\mathcal{Z}}
\newcommand{\Rr}{\mathscr{R}}
\newcommand{\Lh}{{\hat{L}}}
\newcommand{\Mh}{{\hat{M}}}
\newcommand{\Ut}{{\tilde{U}}}
\def\b{\beta}
\def\e{\epsilon}
\def\eps{\epsilon}
\let\P\relax
\DeclareMathOperator\P{\textsf{P}}
\def\textiid{i.i.d.\@\xspace}
\newcommand\iid{\ifmmode\text{ i.i.d. } \else \textiid \fi}
\def\mathllap{\mathpalette\mathllapinternal}
\def\mathllapinternal#1#2{%
  \llap{$\mathsurround=0pt#1{#2}$}}
\def\clap#1{\hbox to 0pt{\hss#1\hss}}
\def\mathclap{\mathpalette\mathclapinternal}
\def\mathclapinternal#1#2{%
  \clap{$\mathsurround=0pt#1{#2}$}}
\let\oldstackrel\stackrel
\renewcommand{\stackrel}[2]{\oldstackrel{\mathclap{#1}}{#2}}
\renewcommand{\hbar}{h\mathllap{\overline{\vphantom{h}\hphantom{\rule{4.6pt}{0pt}}}\mspace{0.77mu}}}
\newcommand{\urltilde}{\kern -.06em\lower -.06em\hbox{~}\kern .02em}
\title{
Polar coding for interference networks}
\author{
\centerline{Lele Wang \quad and \quad Eren \c Sa\c so\u glu}
\thanks{L. Wang is with the Department of Electrical and Computer
Engineering, University of California, San Diego, La Jolla, CA 92093
USA (email: lew001@ucsd.edu).}
\thanks{E. \c Sa\c so\u glu is with the Department of Electrical
Engineering and Computer Sciences, University of California,
Berkeley, CA 94720 USA (email: eren@eecs.berkeley.edu).}
}
\begin{document}
\maketitle

\begin{abstract}
A polar coding scheme for interference networks is introduced.  The
scheme combines Ar\i kan's monotone chain rules for multiple-access
channels and a method by Hassani and Urbanke to `align' two
incompatible polarization processes.  It achieves the Han--Kobayashi
inner bound for two-user interference channels and generalizes to
interference networks. 
\end{abstract}

\section{Introduction}

Interference is one of the fundamental challenges in wireless
communication.  When multiple sender--receiver pairs communicate
simultaneously over a shared medium, the signal arrived at each receiver is a
mixture of its intended signal and undesired signals from all other
transmitters.
Therefore, even in the absence of noise, transmission from a sender to
its receiver is limited by the presence of transmission from other
parties.

The two-user memoryless interference channel models the simplest
such communication setting.  It is described by channel input
alphabets $\Xc$,~$\Wc$, output alphabets $\Yc$,~$\Zc$, and for all
$(x,w,y,z)\in\Xc\times\Wc\times\Yc\times\Zc$, the probability
$P(y,z|x,w)$ of receiving~$(y,z)$ when~$(x,w)$ are input to the
channel (Figure~\ref{fig:dmic}).

\begin{figure}[hbtp]
\footnotesize
\def\svgscale{1}
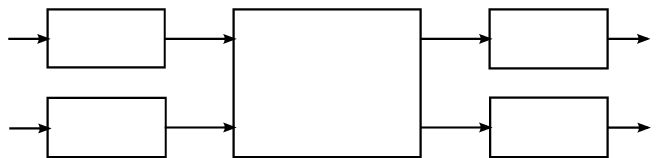
\caption{Two-user interference channels.}
 \label{fig:dmic}
\end{figure}

A $(2^{NR_1}, 2^{NR_2}, N)$ code for the two-user interference channel consists
of
\begin{itemize}
 \item two encoding functions $x^N(M_1)$ and
 $w^N(M_2)$, defined for messages $M_1\in[1::2^{NR_1}]$ and
 $M_2\in[1::2^{NR_2}]$. 
 \item two decoding functions $\hat{m}_1(y^N)$ and $\hat{m}_2(z^N)$, 
 defined for each received sequence $y^N\in\Yc^N$~and~$z^N\in\Zc^N$. 
\end{itemize}
Messages $M_1$ and $M_2$ are assumed to be uniformly distributed. 
The
average probability of error is defined as $P_e^{(N)} = \P\{(\Mh_1,
\Mh_2) \neq (M_1, M_2)\}$.
A rate pair $(R_1, R_2)$ is achievable if
there exists a sequence of $(2^{NR_1}, 2^{NR_2}, N)$ codes with 
$\lim_{N \to \infty} P_e^{(N)} = 0$. 
The capacity region is the
closure of the set of achievable rate pairs.

The capacity region of the two-user interference channel is not known
in general.  The best known inner bound to the capacity region was
given by Han and Kobayashi in~\cite{HK1981}.  Our aim here is to show
the achievability of this inner bound by polar coding techniques.

The Han--Kobayashi scheme consists in splitting each sender's message
into two parts, and letting each receiver decode one part of the
interfering sender's message in addition to both parts of its own
sender's message.  This creates a three-sender multiple-access channel
(MAC) for each receiver, and the code rates are required to satisfy
both MACs simultaneously.  Since these MACs share two of their three
senders, the situation is similar to a \emph{compound} setting, in
which codes must be designed to perform well simultaneously over
several (in this case, two) MACs.

Given these observations, one may hope to apply the standard results
on MACs to design polar codes for the interference channel.  In
particular, the corner points of a MAC's capacity region are known to
be achievable by standard polar coding techniques~\cite{STY2013}.
This readily implies the achievability of the entire MAC capacity
region by polar coding, since all achievable points can be turned into
corner points by the rate-splitting techniques of~\cite{GRUW2001}.
Unfortunately, rate-splitting techniques do not generalize in a
straightforward manner to the compound setting.  In particular, it is
shown in a parallel study~\cite{WSK2014} that standard applications of
rate-splitting techniques fail to achieve optimal compound rates in
general.  This makes it unclear whether polar coding techniques can be
combined with rate-splitting ideas to achieve the Han--Kobayashi inner
bound.  

Here, we show an alternative polar coding method that achieves the
capacity region of compound MACs and by extension the Han--Kobayashi
inner bound.  The method is based on appropriately combining two
techniques developed recently by Ar{\i}kan~\cite{Arikan2012}, and
Hassani and Urbanke~\cite{HU2013}.  We briefly review these techniques
first. 

\section{Preliminaries}

\subsection{Aligning polarized indices (\cite{HU2013})}
\label{sec:alignment}

Consider two binary-input memoryless channels  $P : X \to Y$ and $Q: X
\to Z$ with equal symmetric capacities $I(P)=I(Q)$.  Suppose we wish
to design a polar code that performs well over both of these channels.
For $N = 2^n$, define $U^N = X^N G_N$, where $G_N =
\bigl[\begin{smallmatrix} 1,0\\ 1,1\end{smallmatrix}\bigr]^{\otimes n}
B_N$ is the standard polar transformation. Here, ${\otimes n}$ denotes
the $n$th Kronecker power and $B_N$ is the `bit-reversal' permutation.
Define the channels $P_i:U_i\to Y^NU^{i-1}$ and $Q_i:U_i\to
Z^NV^{i-1}$ and sets
{\allowdisplaybreaks
\begin{align}
\begin{split}
\label{eqn:four-sets}
 \Gc_Y &= \{i\in[1::N]\suchthat I(P_i) > 1-2^{-N^\b}\},\\
 \Gc_Z &= \{i\in[1::N]\suchthat I(Q_i) > 1-2^{-N^\b}\},\\
 \Bc_Y &= \{i\in[1::N]\suchthat I(P_i) < 2^{-N^\b}\},\\
 \Bc_Z &= \{i\in[1::N]\suchthat I(Q_i) < 2^{-N^\b}\}.
\end{split}
\end{align}}%
for some $\b < 1/2$.  Standard polarization results imply that
$|\Gc_Y|/N\approx I(P)=I(Q)\approx|\Gc_Z|/N$ for large~$N$, and thus
almost all bit indices belong to one of the following four sets:
\allowdisplaybreaks{
\begin{align*}
 \Ac_\text{I} &= \Gc_Y \cap \Gc_Z,\\
 \Ac_\text{II} &= \Gc_Y \cap \Bc_Z,\\
 \Ac_\text{III}&= \Bc_Y \cap \Gc_Z,\\
 \Ac_\text{IV}&=  \Bc_Y \cap \Bc_Z.
\end{align*}}%
It suffices to discuss only the bit indices of the above four types,
and assume that the remaining bit values are fixed and revealed to all
receivers.  Note that type-I indices see clean channels for both
$P$~and~$Q$ and thus can carry information.  Similarly, type-IV indices are bad
for both channels and can be fixed.  Type-II~and~III indices are
\emph{incompatible}, i.e., they are good for one channel and bad for the other.
Moreover, the fraction $(|\Ac_\text{II}|+|\Ac_\text{III}|)/N$ of incompatible
indices is non-negligible in general~\cite{HKU2009}, and therefore standard
polar coding does not achieve the compound capacity of arbitrary channels
$P$~and~$Q$.  

Hassani and Urbanke propose a simple solution to this problem, which
\emph{aligns} the good indices of the two channels.  Given two independent
binary-input memoryless channels $P: X_1\to Y_1$ and
$Q:X_2\to Y_2$, define the binary-input channels
$$
(P,Q)^-(y_1,y_2| u_1)
	=\sum_{u_2} 
	\tfrac12 P(y_1| u_1\oplus u_2)Q(y_2| u_2),
$$
$$
(P,Q)^+(y_1,y_2,u_1|u_2)
	=\tfrac12 P(y_1| u_1\oplus u_2)Q(y_2| u_2),
$$
and note that
\begin{align}
\begin{split}
\label{eqn:minmax}
I((P,Q)^-)&\le \min\{I(P),\,I(Q)\}\\
I((P,Q)^+)&\ge \max\{I(P),\,I(Q)\}.
\end{split}
\end{align}
Now let $i$ and $j$ be a type-II and a type-III index, respectively.  That is,
\begin{align*}
I(P_i)&\approx1 \; \text{ and } \; I(P_j)\approx0,\\
I(Q_i)&\approx0 \; \text{ and } \; I(Q_j)\approx1.
\end{align*}
It then follows from~\eqref{eqn:minmax} that
\begin{align*}
 I((P_i,P_j)^-)&\approx 0 \;\; \text{ and } \;\; I((P_i,P_j)^+)
\approx 1,\\
 I((Q_i,Q_j)^-)&\approx 0 \;\; \text{ and } \;\; I((Q_i,Q_j)^+) \approx 1,
\end{align*}
%
In words, combining two incompatible indices results in an almost perfect `plus'
channel and almost useless `minus' channel, regardless of the underlying
channel.  This `aligns' the mutual informations for such indices.  Taking two
blocks of $U^N$, one can combine almost all type-II indices from one block with
type-III indices from the other block,
since~$|\Ac_\text{II}|/N\approx|\Ac_\text{III}|/N$.  More precisely, suppose
$\Ac_\text{II} = \{c_1,c_2,\ldots,c_m\}$ and $\Ac_\text{III}
=\{d_1,d_2,\ldots,d_n\}$, where the elements are written in increasing order.
Define $U^N=X^NG_N$ and $E^N=X_{N+1}^{2N}G_N$.  Then, combining $U_{c_j}$ with
$E_{d_j}$, $j=1,\dotsc,q=\min\{m,n\}$, and leaving the remaining symbols
uncombined yields the length-$2N$ sequence
\begin{equation*}
\begin{split}
\Ut^{2N}&=
\bigl(U^{c_1-1}, E^{d_1-1}, U_{c_1} \oplus
E_{d_1}, E_{d_1},\\
&\hspace{1.8em}\cdots\\
&\hspace{1.8em}U_{c_{q-1}+1}^{c_q-1}, E_{d_{q-1}+1}^{d_q-1}, 
U_{c_q}\oplus E_{d_q}, E_{d_q},\\
&\hspace{1.8em}U_{c_q+1}^N, E_{d_q+1}^N\bigr).
\end{split}
\end{equation*}
Then, the mutual informations of channels $\Ut_i\to Y^{2N}\Ut^{i-1}$ and
$\Ut_i\to Z^{2N}\Ut^{i-1}$ are aligned for the combined indices
$\Ut_i=U_{c_j}\oplus E_{d_j}$ and $\Ut_i=E_{d_j}$, and unchanged for the
remaining ones.  Note again that the indices in~$\Ac_\text{III}$ of the first
block and~$\Ac_\text{II}$ of the second block are not combined with each other
and remain incompatible.  This ensures that the combined indices are polarized
as desired. The fraction of incompatible indices is thus halved by this
alignment, to $(|\Ac_\text{II}|+|\Ac_\text{III}|)/2N$.  Recursively aligning the
indices~$k$ times in this fashion then reduces this fraction to
$(|\Ac_\text{II}|+|\Ac_\text{III}|)/2^kN$, and thus the rate $I(P)=I(Q)$ can be
achieved on both channels by picking a large $k$. 

\begin{figure}[hbtp]
\centering
\small
\def\svgscale{1.4}
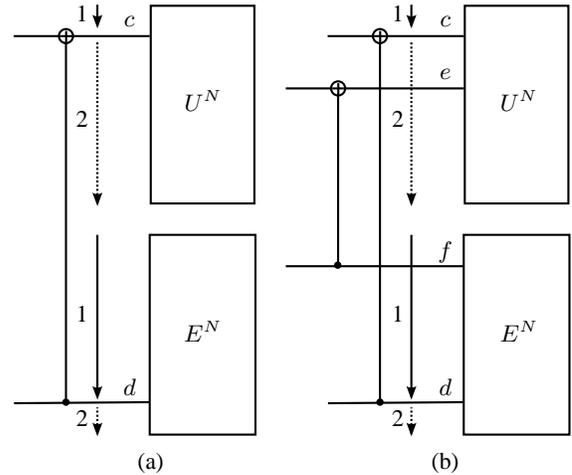
\caption{Alignment of the incompatible indices}
\label{fig:chainp2p}
\end{figure}
%

To show proper alignment of incompatible indices and its
corresponding decoding order, consider an example where $\Ac_\text{II} =
\{c\}$ and
$\Ac_\text{III} = \{d\}$. We
combine $U_c$ from block~1 with $E_d$ from block~2 as in
Figure~\ref{fig:chainp2p} (a). Decoding is thus done in the following
order. Variables along the solid line arrows should be decoded before the
variables along the dash line arrows. Variables along arrows with the same
number can be decoded parallelly. Figure~\ref{fig:chainp2p} (b) shows an example
where improper combining violates the successive decodability. Here in order to
decode $U_c \oplus E_d$, one needs to know $E^{d-1}$, and in particular $E_f$.
However, the decoding of $E_f$ involves $U_e$, which won't be available before
knowing $U_c \oplus E_d$. Therefore, it is crucial to order  type
II and type III indices in increasing order and combine the $j$-th type II
index from one block with the $j$-th type III index from another independent
block.


\subsection{`Polar Splitting' for MAC (\cite{Arikan2012})}
Consider a two-user MAC~$(\Xc \times \Wc, P(y|x,w), \Yc)$, where sender~1 and
sender~2 wish to communicate two messages $M_1$ and $M_2$ to the receiver by
respectively sending codewords $X^N(M_1)$ and $W^N(M_2)$ over $N$ uses
of the channel. The capacity region of this channel is given by 
\begin{align}
\label{eqn:mac-union}
\bigcup_p \Rr (p),
\end{align}
where the union is over all distributions of the form
$p=p(q)p(x|q)p(w|q)P(y|x,w)$, and $\Rr(p)$ is the set of non-negative rate pairs
$(R_1, R_2)$ satisfying
\begin{align}
\begin{split}
\label{eqn:pentagon}
R_1 &\le I(X; Y,W|Q),\\
R_2 &\le I(W; Y,X|Q),\\
R_1 + R_2 &\le I(X,W; Y|Q).
\end{split}
\end{align}
The subset of $\Rr(p)$ satisfying $R_1+R_2=I(X,W;Y|Q)$ is called its
\emph{dominant face}, and the two points $(I(X;Y|Q), I(W;Y,X|Q))$ and
$(I(X;Y,W|Q), I(W;Y|Q))$ are called its \emph{corner points}.  We will
first consider uniform $X$~and~$W$ and constant~$Q$; generalizations
to arbitrary distributions are discussed in Section~\ref{sec:coded-ts}.

In~\cite{Arikan2012}, Ar\i kan develops a polar coding method that achieves the
entire dominant face based on the following observations:
Let $U^N = X^N G_N$ and $V^N = W^N G_N$.
Consider the chain rules of the form
\[
\sum_{i=1}^{2N} I(S_i;Y^N|S^{i-1}),
\]
where $S^{2N} = (S_1,\ldots,S_{2N})$ is a \emph{monotone} permutation of
$U^NV^N$, i.e.,  elements of both $U^N$~and~$V^N$ appear in increasing order
in~$S^{2N}$.  Let $\Sc_U$~and~$\Sc_V$ respectively denote the set of indices of
$S^{2N}$ 
with $S_i=U_k$~and~$S_i=V_k$, and define the rates
\begin{align}
\begin{split}
\label{eqn:rates}
R_1 &= \frac{1}{N} \sum_{\substack{i\in \Sc_U}} I(S_i;Y^N|S^{i-1}),\\
R_2 &= \frac{1}{N} \sum_{\substack{i\in \Sc_V}} I(S_i;Y^N|S^{i-1}).
\end{split}
\end{align}
The entire region $\Rr(p)$ can be achieved by polar coding if $(R_1,R_2)$ can be
set to arbitrary values on the dominant face and if the mutual informations
$I(S_i;Y^N|S^{i-1})$ are polarized.  It turns out that these requirements are
satisfied by permutations of the form~$S^{2N}=(U^i,V^N,U_{i+1}^N)$.

\begin{proposition}[\cite{Arikan2012}]
\label{prop:2-split}
For every $\epsilon>0$, $\beta<1/2$, and rate pair $(I_1,I_2)$ on the dominant
face of $\Rr(p)$, there exist an $N$ and a permutation
$S^{2N}=(U^i,V^N,U_{i+1}^N)$
such that
\begin{itemize}
\item[(i)]
$|R_1 - I_1| < \eps$ and $|R_2 - I_2| < \eps$,
\item[(ii)]
$$
\frac{|\Gc^{(1)}|}{N} >R_1-\epsilon
\quad\text{ and }\quad 
\frac{|\Gc^{(2)}|}{N}>R_2-\epsilon,
$$
\end{itemize}
where 
\begin{align*}
\Gc^{(1)} = \{i\in \Sc_U 
	\suchthat I(S_i;Y^N|S^{i-1}) > 1-2^{-N^\beta}\},\\
\Gc^{(2)} = \{i\in \Sc_V 
	\suchthat I(S_i;Y^N|S^{i-1}) > 1-2^{-N^\beta}\}.
\end{align*}
\end{proposition}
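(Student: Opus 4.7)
The plan is to prove (i) by a continuity argument along the family $\pi_i=(U^i,V^N,U_{i+1}^N)$, and (ii) as an immediate consequence of polarization. First, observe that for any monotone permutation $S^{2N}$ of $U^NV^N$, telescoping gives
\[
N(R_1+R_2) = \sum_{i=1}^{2N}I(S_i;Y^N\cond S^{i-1}) = I(U^NV^N;Y^N) = N\,I(X,W;Y),
\]
so $(R_1,R_2)$ always sits on the dominant face. At the endpoint $\pi_0=(V^N,U^N)$, every term of the sum defining $R_1$ has $V^N$ in its conditioning, so chain rule yields $R_1=I(X;Y|W)=I(X;Y,W)$ (using $X\perp W$) and $R_2=I(W;Y)$. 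Dually, $\pi_N=(U^N,V^N)$ gives the other corner $(I(X;Y),I(W;Y,X))$.

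To get (i), I would show that $(R_1(\pi_i),R_2(\pi_i))$ traces the dominant face in steps of size $O(1/N)$. Moving from $\pi_i$ to $\pi_{i+1}$ only relocates $U_{i+1}$ (from just after $V^N$ to just before it), so
\[
N\bigl[R_1(\pi_i)-R_1(\pi_{i+1})\bigr] = I(U_{i+1};Y^N\cond U^i,V^N) - I(U_{i+1};Y^N\cond U^i),
\]
which lies in $[0,1]$: non-negativity follows from $U^N\perp V^N$ (the difference simplifies to $H(U_{i+1}\cond U^i,Y^N)-H(U_{i+1}\cond U^i,V^N,Y^N)\ge 0$), and the upper bound since both mutual informations are in $[0,1]$. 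By the dominant-face identity, $|R_2(\pi_{i+1})-R_2(\pi_i)|=|R_1(\pi_{i+1})-R_1(\pi_i)|\le 1/N$. Since $R_1(\pi_i)$ is monotone in $i$ from one corner to the other in steps of at most $1/N$, any $(I_1,I_2)$ on the dominant face lies within $\epsilon$ of some $(R_1(\pi_i),R_2(\pi_i))$ as soon as $N>1/\epsilon$.

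For (ii), suppose that $\{I(S_j;Y^N\cond S^{j-1})\}_{j=1}^{2N}$ polarizes, i.e., for every $\b<1/2$ all but an $o(1)$ fraction of these values lie in $[0,2^{-N^\b}]\cup[1-2^{-N^\b},1]$. Partition $\Sc_U$ into $\Gc^{(1)}$, a near-zero set on which the terms are at most $2^{-N^\b}$, and an intermediate set of size $o(N)$. Then
\[
R_1 = \frac{1}{N}\sum_{j\in\Sc_U}I(S_j;Y^N\cond S^{j-1}) \le \frac{|\Gc^{(1)}|}{N} + 2^{-N^\b} + o(1),
\]
which gives $|\Gc^{(1)}|/N > R_1-\epsilon$ for large $N$; the bound on $|\Gc^{(2)}|/N$ is identical.

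The hard part is the polarization claim invoked above. For the corner chains $\pi_0,\pi_N$ it follows from single-user polarization after folding one sender's codeword into decoder side information. For general $i$ the chain interleaves two independent polar processes and polarization is not automatic. The plan there is to follow Ar\i kan: realize $\pi_i$ as a sequence of adjacent swaps $(\cdots U_a V_b\cdots)\leftrightarrow(\cdots V_b U_a\cdots)$ starting from a corner chain, verify that each such swap merely exchanges two mutual-information terms without disturbing the Bhattacharyya-parameter recursion elsewhere, and close with the usual martingale-convergence and rate-of-polarization arguments applied to the resulting process.
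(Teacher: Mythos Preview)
Your argument for part~(i) is correct and mirrors the continuity mechanism the paper relies on (the paper does not prove Proposition~\ref{prop:2-split} itself---it is cited from Ar\i kan---but the same step-size bound drives the paper's proof of the $K$-user generalization, Proposition~\ref{prop:k-split}). The observation that moving from $\pi_i$ to $\pi_{i+1}$ changes each rate by at most $1/N$, together with the identification of the two endpoints with the two corner points, is exactly the right argument.

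The gap is in part~(ii). Your reduction of $|\Gc^{(1)}|/N > R_1-\epsilon$ to a polarization statement is fine, but your plan for establishing polarization of the interleaved chain is not the one that works. Adjacent swaps at a \emph{fixed} blocklength move you between different monotone chains and genuinely perturb the two swapped mutual-information terms; they do not ``merely exchange'' them, and there is no reason a sequence of such swaps starting from a polarized corner chain leaves you with a polarized chain. What Ar\i kan actually does---and what the paper invokes for the $K$-user case as ``standard path scaling''---is different in kind: once a path at blocklength $N$ with the desired rate pair is fixed, one \emph{scales} it to blocklengths $2N,4N,\dotsc$ by a construction compatible with the polar recursion, so that the rate pair is preserved and each bit-channel at the larger blocklength is a $+/-$ transform of one at the smaller blocklength. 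Polarization (and the $2^{-N^\beta}$ rate) then follows from the single-user theory applied componentwise. Your swap-then-martingale sketch does not supply this recursive structure, and you should replace it with the path-scaling argument.
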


\section{Two-user Compound MAC}

We are now ready to described a polar coding scheme for the two-user compound
MAC consisting of two channels $P_Y(y|x,w)$ and $P_Z(z|x,w)$.  The channel is
assumed to be known at the receiver but not at the transmitter.  A rate pair
$(R_1,R_2)$ is achievable if there exists a sequence of codes with rates
approaching $(R_1,R_2)$ and vanishing error probability on both MACs.  The
capacity region is described by
\begin{equation}
\label{eq:comp-mac}
\bigcup_p \bigl(\Rr_Y(p) \cap \Rr_Z(p)\bigr),
\end{equation}
where $\Rr_Y(p)$~and~$\Rr_Z(p)$ are as in~\eqref{eqn:pentagon}. 
Recall that for the simple MAC, the time-sharing random variable~$Q$
in~\eqref{eqn:pentagon} can be replaced by a convex hull operation on the union
in~\eqref{eqn:mac-union}.  
However, in the compound case,
this substitution leads in general to a strictly smaller rate region.

\subsection{Uniform Independent Inputs}
\label{sec:2-alignment}

\begin{figure}[hbtp]
\centering
\psfrag{r1}[lb]{$R_1$}
\psfrag{r2}[lb]{$R_2$}
\psfrag{corner}[lc]{target point $(I_1,I_2)$}
\includegraphics[width= .4\linewidth]{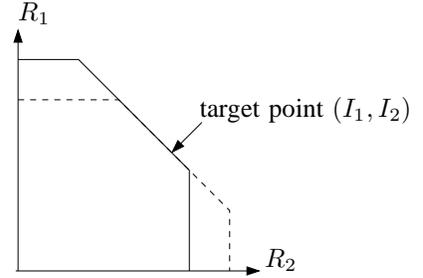}
\caption{Two MAC regions with equal sum-rate.}
\label{fig:equalsum}
\end{figure}

Assume $X$ and $W$ are uniform and independent, $Q = \emptyset$. The simplest
nontrivial case is when the two pentagons in~\eqref{eqn:pentagon}
intersect as in Figure~\ref{fig:equalsum}, with equal sum-rate $I(X,W;Y) =
I(X,W;Z)$. Let
$(I_1,I_2)$ be a rate point on the dominant face of this intersection. Let $U^N
= X^N G_N$ and $V^N = W^N G_N$.  By Proposition~\ref{prop:2-split}, there exists
an~$N$ and two monotone permutations $S^{2N}$ and
$T^{2N}$ for which the mutual informations
$I(S_i;Y^N|S^{i-1})$ and $I(T_i;Z^N|T^{i-1})$ are polarized, and the
corresponding rate pairs in~\eqref{eqn:rates} are close to $(I_1,I_2)$.
However, as in the point-to-point case, the two sets of mutual informations
$\{I(S_i;Y^N|S^{i-1})\suchthat i\in \Sc_U\}$ and
$\{I(T_i;Z^N|T^{i-1})\suchthat i\in \Tc_U\}$ may be incompatible. One can
similarly identify the type of index $i$ by comparing the mutual informations
of the bit-channels $S_j \to Y^NS^{j-1}$ and $T_k \to Z^NT^{k-1}$, where $S_j =
T_k = U_i$, and find the type II and
type III incompatible index sets $\Ac_\text{II}^{(1)}$ and
$\Ac_\text{III}^{(1)}$ for $U$, and $\Ac_\text{II}^{(2)}$ and
$\Ac_\text{III}^{(2)}$ for $V$.

\begin{figure}[hbtp]
\centering
\small
\def\svgscale{1.4}
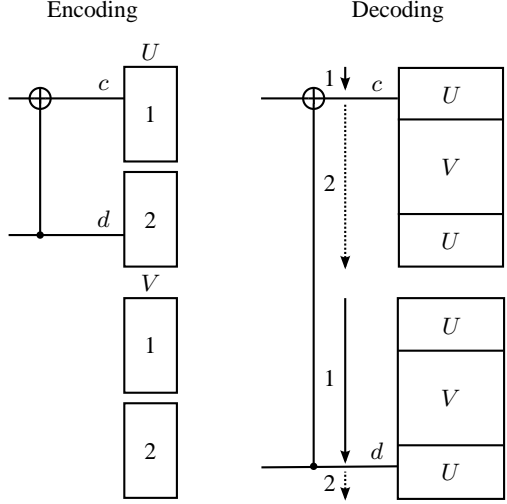
\caption{First recursion.}
\label{fig:chain}
\end{figure}

\begin{figure}[hbtp]
\centering
\small
\def\svgscale{1.3}
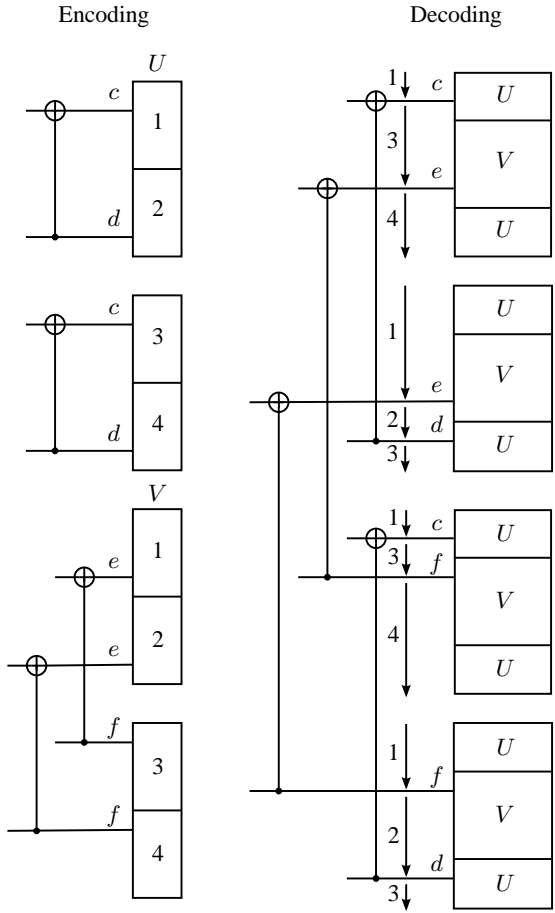

\caption{Second recursion.}
\label{fig:chain1x4}
\end{figure}

Now we apply the technique in Section~\ref{sec:alignment} to align
incompatible
indices of both $U$'s and $V$'s.  Here, as in the point-to-point case, care must
be taken to combine the random variables in a way that guarantees successive
decodability.  This can be done by aligning only the $U$'s or $V$'s in any given
recursion.  Also, as before, only half of the
incompatible indices of $U$'s (or $V$'s) are aligned in a single recursion.
Aligning the two index sets alternately over~$2k$ recursions, both fractions of
incompatible indices can be reduced to~$1/2^k$ times their original values. 
%

As an example, we show two recursions of alignment, where incompatible
$U$'s are aligned in the first recursion and incompatible $V$'s are aligned in
the second. Suppose $\Ac_\text{II}^{(1)} = \{c\}, \Ac_\text{III}^{(1)} =
\{d\},\Ac_\text{II}^{(2)} = \{e\},$ $\Ac_\text{III}^{(2)} = \{f\}$.
In the first recursion (blocks 1~and~2), $U$'s
are aligned while $V$'s are left uncombined (Figure~\ref{fig:chain}). The
decoding order for receiver~1 is shown on the right. After
stacking the $U$'s and $V$'s according to the monotone
permutation $S^{2N} = (U^i, V^N, U_{i+1}^N)$, decoding can be proceeded in a
similar fashion as in the
point-to-point alignment (recall Figure~\ref{fig:chainp2p}). In the second
recursion (Figure~\ref{fig:chain1x4}), a copy of
the length-$2N$ superblock is made  (blocks 3~and~4) for both $U$~and~$V$. The
two superblocks of
$V$'s are aligned while the two superblocks of $U$'s are left uncombined (as
shown on the left). At
the decoder~1, $U$'s and $V$'s from the same block are stacked according to the
monotone permutation $S^{2N}$. The uncombined indices in each block are decoded
until reaching a combined index. Then the two combined indices are decoded.
Since in each recursion, only incompatible indices for $U$ (or $V$) are combined
in the right order, successive decodability is guaranteed as in the
point-to-point case. More specifically for the running example, variables along
an arrow with smaller number should be decoded before those with a bigger
number, and variables along arrows with the same number can be decoded
parallelly. The monotone permutation $S^{8N}$ is defined by variables listed
according to such a decoding order. The corresponding rate pair $(R_1^s, R_2^s)$
are defined as before
\begin{align*}
 R_1^s &= \frac1N \sum_{i\in \Sc_U} I(S_i;Y^{4N}|S^{i-1}),\\
 R_2^s &= \frac1N \sum_{i\in \Sc_V} I(S_i;Y^{4N}|S^{i-1}).
\end{align*}
The decoding at the receiver~2 is performed 
according to the monotone permutation $T^{2N}$ in the similar fashion. The
resulting permutation $T^{8N}$ and its rate pair $(R_1^t, R_2^t)$ can be defined
similarly. Clearly, the  fraction of incompatible indices for $U$ (and $V$)
is halved in the first (second) recursion.


\begin{figure}[hbtp]
 \centering
 \psfrag{r1}[lb]{$R_1$}
 \psfrag{r2}[lb]{$R_2$}
 \psfrag{rate1}[rc]{$(I_1',I_2')$}
 \psfrag{rate2}[lc]{$(I_1'',I_2'')$}
 \psfrag{target}[lc]{$(I_1,I_2)$}
 \includegraphics[width= .4\linewidth]{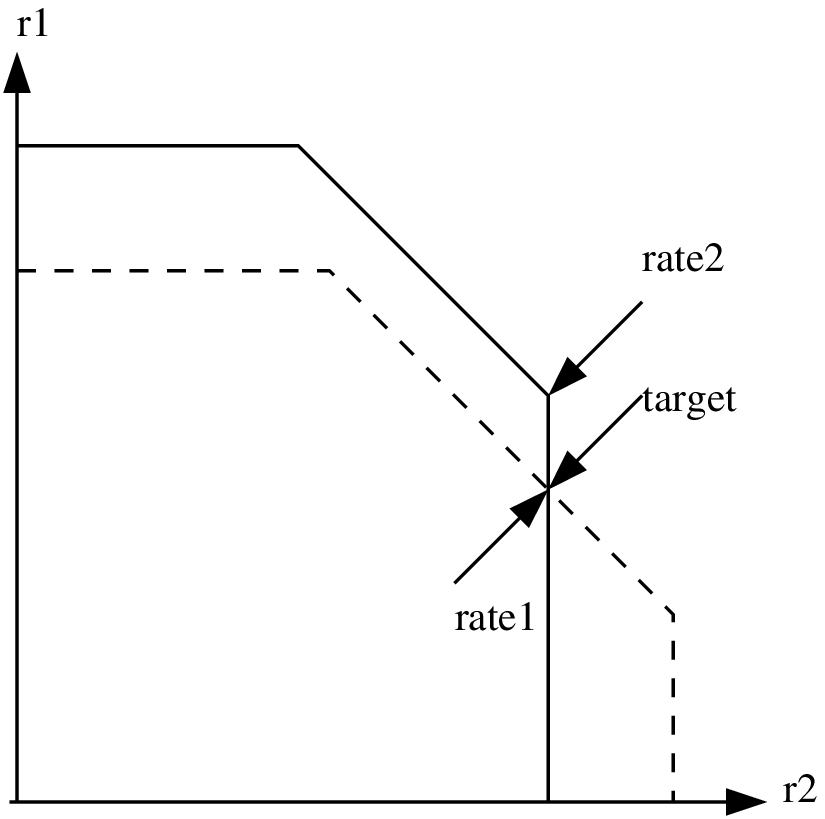}
 \caption{Two MAC regions with unequal sum-rates.}
 \label{fig:arbi-sum}
 \vspace{-1em}
\end{figure}

To achieve a rate point $(I_1,I_2)$ in the general case where $I(X,W;Y) \neq
I(X,W;Z)$ as in Figure~\ref{fig:arbi-sum}, one can find two monotone
permutations, which
respectively approximate rate pairs $(I_1',I_2')$ on the dominant face of
$P_Y(y|x,w)$ and $(I_1'',I_2'')$ on the dominant face of $P_Z(z|x,w)$, such that
\begin{align*}
 I_1 &\le \min\{I_1',I_1''\},\\
 I_2 &\le \min\{I_2',I_2''\}.
\end{align*}
Then, applying the approach above achieves the target rate point.

\subsection{Arbitrary Inputs}
\label{sec:coded-ts}
Based on the polar coding scheme developed for uniform and independent
$X$~and~$W$, one can adapt the method in~\cite[Section III-D]{STA2009} to design
a polar coding scheme for independent \emph{nonuniform} $X$ and $W$.
For correlated input distribution $\tilde{p} = p(q)p(x|q)p(w|q)$, there
exist $(X',W',Q)$
mutually independent and functions $x(x',q)$ and $w(w',q)$ that induce
the same
distribution as $\tilde{p}$. Now consider a
new MAC with inputs $X'$ and $W'$, vector output $(Y,Q)$, and
conditional distribution $P'(y,q|x',w') = p(q)P(y|x(x',q),w(w',q))$, where $Q$
is the
common randomness shared at the senders and the receiver. Then the achievable
rate region for the new MAC is
the set of rate pairs $(R_1,R_2)$ such that
\begin{align*}
 R_1 &\le I(X';Y,Q,W') = I(X;Y,W|Q),\\
 R_2 &\le I(W';Y,Q,X') = I(W;Y,X|Q),\\
 R_1 + R_2 &\le I(X',W';Y,Q) = I(X,W;Y|Q)
\end{align*}
for distribution
$p' = p(q)p(w')p(x')p(x|x',q)p(w|w',q)$ $P(y|x(x',q),w(w',q))$, where
$p(x|x',q)$ and $p(w|w',q)$ are $\{0,1\}$-valued according to $x(x',q)$ and
$w(w',q)$. This rate region is identical to $\Rr_Y(p)$ as $p' \equiv
p$. Similarly the rate region $\Rr_Y(p) \cap \Rr_Z(p)$ can be
described by considering the compound MAC with inputs $X'$ and $W'$,
vector output $(Y,Z,Q)$, and conditional distribution $P'(y,z,q|x',w') =
p(q)P_Y(y|x(x',q),w(w',q))P_Z(z|x(x',q),w(w',q))$. One can apply the method
designed for independent input to achieve arbitrary point in the rate region of
the new compound MAC.
To complete the proof, one just need to show the
existence of a good common random sequence $q^n$, which is shared at
the senders and the receiver before the transmission.  This is
guaranteed since the average probability of error over all possible
choices of $q^n$ is small.

\subsection{Main Result}
\begin{theorem}
\label{thm:mac}
For every $\epsilon>0$, $\beta<1/2$, and rate pair $(I_1,I_2)$ in the rate
region $\Rr_Y(p)\cap \Rr_Z(p)$, there exist $N, M = 2^kN$, and two monotone
permutation 
$S^{2M}$ and $T^{2M}$ with associated  rate pairs $(R_1^{s},R_2^s)$ and
$(R_1^t, R_2^t)$  such that for $j = 1,2$,
\begin{itemize}
\item[(i)]
$$|\min\{R_j^s, R_j^t\} - I_j| < \eps,$$
\item[(ii)]
$$
\frac{|\Gc_Y^{(j)}\cap \Gc_Z^{(j)}|}{M} >\min\{R_j^s,
R_j^t\}-\epsilon,
$$
where 
\begin{align*}
\Gc_Y^{(1)} &= \{i\in \Sc_U \suchthat I(S_i;Y^{M}|S^{i-1}) >
1-2^{-N^\beta}\},\\
\Gc_Z^{(1)} &= \{i\in \Tc_U \suchthat  I(T_i;Z^{M}|T^{i-1}) >
1-2^{-N^\beta}\},
\end{align*}
$\Gc_Y^{(2)}, \Gc_Z^{(2)}$ are defined similarly by replacing $U$ by
$V$.
\end{itemize}
\end{theorem}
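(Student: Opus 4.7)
The plan is to combine Proposition~\ref{prop:2-split} with the alignment idea of Section~\ref{sec:alignment}, iterated over $2k$ recursions. First assume $X,W$ are uniform and independent and that the two pentagons $\Rr_Y(p),\Rr_Z(p)$ have equal sum-rate (Figure~\ref{fig:equalsum}); the general case reduces to this one via the unequal-sum-rate argument at the end of Section~\ref{sec:2-alignment} and the auxiliary-input construction of Section~\ref{sec:coded-ts}. Apply Proposition~\ref{prop:2-split} twice, once per receiver, to obtain a base block length $N$ and monotone permutations $S^{2N}=(U^i,V^N,U_{i+1}^N)$ and $T^{2N}=(U^j,V^N,U_{j+1}^N)$ whose rate pairs are within $\epsilon/2$ of $(I_1,I_2)$ and whose bit-channel mutual informations are polarized at level $2^{-N^\beta}$. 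For each of $U$ and $V$, comparing the polarization behavior under $S$ and under $T$ partitions the indices into the four sets of Section~\ref{sec:alignment}; only the type II and type III indices are incompatible, and by the polarization theorem $|\Ac_{\text{II}}^{(j)}|/N\approx|\Ac_{\text{III}}^{(j)}|/N$ for $j=1,2$.

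Next, alternately align incompatible $U$ and $V$ indices using the Hassani--Urbanke construction. In each ``odd'' recursion, take two current superblocks and align only the type II/III $U$ indices, pairing the $\ell$-th type II index with the $\ell$-th type III index in increasing order and leaving all $V$'s uncombined; in the next ``even'' recursion, perform the symmetric alignment on the $V$'s. By \eqref{eqn:minmax}, each such step halves the fraction of incompatible indices for the aligned variable while leaving the other variable's fractions unchanged. After $2k$ recursions the block length is $M=2^k N$ and both $|\Ac_{\text{II}}^{(1)}\cup\Ac_{\text{III}}^{(1)}|/M$ and $|\Ac_{\text{II}}^{(2)}\cup\Ac_{\text{III}}^{(2)}|/M$ are at most $2^{-k}$ times their original values. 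Choosing $k$ large enough makes these fractions smaller than $\epsilon/2$, which combined with the polarization of the type I indices yields conclusion (ii).

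The main obstacle will be preserving successive decodability through all $2k$ alignments; the hazards are precisely those illustrated in Figure~\ref{fig:chainp2p}(b), where a poorly ordered alignment creates a circular dependency between two combined pairs. I would handle this by enforcing the point-to-point alignment rule (combine the $\ell$-th type II with the $\ell$-th type III in increasing index order) in every recursion, and by aligning only one of $U,V$ per recursion so that the monotone-permutation structure underlying Proposition~\ref{prop:2-split} is preserved on the untouched variable. With this scheduling, decoding at receiver~1 proceeds along $S^{2M}$ superblock-by-superblock: uncombined symbols are decoded in the usual polar order until a combined pair is reached, at which point the pair is decoded jointly as in the point-to-point case. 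Formally I would argue by induction on the recursion level that $S^{2M}$ (and symmetrically $T^{2M}$) remains a monotone permutation of the $U$'s and $V$'s from all aligned superblocks, so that the mutual-information chain rule gives total rate sums equal to $N R_1^s + N R_2^s$ and $N R_1^t + N R_2^t$, establishing part (i).

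Finally, for a general $(I_1,I_2)\in\Rr_Y(p)\cap\Rr_Z(p)$ with possibly unequal sum-rates (Figure~\ref{fig:arbi-sum}), pick $(I_1',I_2')$ on the dominant face of $\Rr_Y(p)$ and $(I_1'',I_2'')$ on the dominant face of $\Rr_Z(p)$ with $I_j\le\min\{I_j',I_j''\}$ for $j=1,2$, and apply the construction above using these respective targets in the two invocations of Proposition~\ref{prop:2-split}; the $\min\{R_j^s,R_j^t\}$ appearing in the theorem is exactly what is needed to absorb the slack. Arbitrary input distributions are then handled by the reduction in Section~\ref{sec:coded-ts}, which maps any $p(q)p(x|q)p(w|q)$ to an equivalent compound MAC with independent inputs $X',W'$ and enlarged vector output $(Y,Z,Q)$, completing the proof.
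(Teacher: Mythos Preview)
Your proposal is correct and follows essentially the same route as the paper: Proposition~\ref{prop:2-split} for each receiver, Hassani--Urbanke alignment alternated between $U$'s and $V$'s to preserve successive decodability, then the unequal-sum-rate reduction of Figure~\ref{fig:arbi-sum} and the auxiliary-input construction of Section~\ref{sec:coded-ts}. One small bookkeeping slip: if each alignment recursion doubles the superblock length, then $2k$ recursions yield block length $2^{2k}N$, not $2^kN$; the $k$ in the theorem statement is just a free parameter, so this does not affect the argument.
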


%

The above theorem implies that arbitrary point in the capacity
region of the two-user compound MAC is achievable with the proposed polar
coding scheme. 
In the two-user \emph{strong} interference channel, 
that is $I(X;Y,W) \le I(X;Z,W)$ and $I(W;Z,X) \le I(W;Y,X)$ for all
$p(x)p(w)$, decoding both messages at each receiver is optimal and the
two-user compound MAC region coincides with the capacity region of the
interference channel. Therefore, the same technique applies to the two-user
strong interference channels.

\section{Interference Networks}
\label{sec:networks}

Now we generalize the result to $K$-sender $L$-receiver interference
networks with input alphabets $\Xc_1, \ldots, \Xc_K$, and output alphabets
$\Yc_1, \ldots, \Yc_L$, and conditional distribution $P(y^L|x^K)$
as depicted in Figure~\ref{fig:network}.
\begin{figure}[htbp]
\small
\def\svgscale{1.1}
\hspace{-1em}
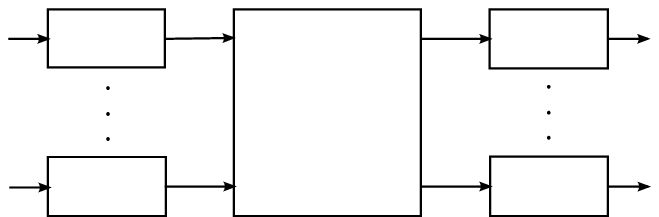
\caption{$K$-sender $L$-receiver interference networks.}
 \label{fig:network}
\end{figure}
Each sender~$j \in[1::K]$
communicates an independent message $M_j$ at rate $R_j$ and each
receiver~$l \in[1::L]$ wishes to recover a subset $\Dc_l \subseteq [1::K]$ of
the messages. The optimal rate region when the encoding is restricted to random
coding ensembles with superposition coding and time sharing~\cite{BEK2012} is
the
union over $\{(\Ac_1,\ldots,\Ac_L): \Ac_l \supseteq \Dc_l, l\in[1::L]\}$ of the
region
\begin{equation}
\label{eq:network}
 \bigcap_{l\in[1::L]}\Rr_{\Ac_l}(p),
\end{equation}
where the input distribution is of the form $p =
p(q)$ $\bigl(\prod_{j=1}^Kp(x_j|q)\bigr)P(y^L|x^K)$ and $\Rr_{\Ac_l}(p)$ is
the set of rate tuples $(R_j\suchthat j \in \Ac_l)$ such that
\begin{equation}
 \label{eq:macregion}
 R(\Jc) \le I(X_\Jc;Y_l,X_{\Ac_l\setminus \Jc}|Q) 
\end{equation}
for all $\Jc \subseteq \Ac_l$. Here we introduce notation 
$$R(\Jc) := \sum_{j\in \Jc} R_j$$
and 
$$X_\Jc := (X_j\suchthat j \in \Jc)$$
for an index set $\Jc$. It is clear from~\eqref{eq:network} that this rate
region is also a compound MAC region.

To apply the proposed polar coding scheme to the interference networks, one
needs to (i) generalize Ar\i kan's polar
splitting result to $K$-user MAC and (ii) align more than two incompatible
polarization processes, each of which involves codes from $K$ users. We prove
(i) in Section~\ref{sec:k-mac} and discuss (ii) in
Section~\ref{sec:k-alignment}. We show two important applications in
Sections~\ref{sec:hk} and~\ref{sec:bc}.

\subsection{`Polar Splitting' for $K$-user MAC}
\label{sec:k-mac}

Consider a $K$-user MAC, where transmitter~$j, j \in [1::K]$, wishes to
communicate a message $M_j$ reliably to the receiver by sending a codeword
$X_j^N(M_j) = (X_{j1}, X_{j2}, \ldots, X_{jN})$ over the memoryless channel
$P(y|x^K)$. The receiver wishes to recover all the messages $M_{[1::K]}$.
The capacity region of the $K$-user MAC is described by
\[
 \bigcup_p \Rr_{[1::K]}(p),
\]
where the union is over all distributions of the form
$p=p(q)\big(\prod_{i=1}^Kp(x_i|q)\big) P(y|x^K)$, and
$\Rr_{[1::K]}(p)$ is defined as in~\eqref{eq:macregion}.


Let $U_j^N = X_j^N G_N$ for $j \in [1::K]$. Similar to the two-user MAC case,
we have the chain rule of the form
\[
 \sum_{i = 1}^{KN} I(S_i;Y^N|S^{i-1}),
\]
where $S^{KN}$ is a \emph{monotone} permutation of $(U_1^N, \ldots, U_K^N)$,
i.e., elements of $U_j^N$ appear in increasing order in $S^{KN}$ for all 
$j \in [1::K]$. Let $\Sc_j$ denote the index set $\{i\suchthat S_i = U_{jk}
\text{ for some } k\}$. Define the associated rate tuple $(R_1, \ldots, R_K)$ of
the monotone permutation as
\[
 R_j = \frac{1}{N}\sum_{i\in \Sc_j} I(S_i;Y^N|S^{i-1})
\]
for $j \in [1::K]$.
We now generalize Ar\i kan's polar-splitting result to $K$ users. 

\begin{proposition}
\label{prop:k-split}
For every $\epsilon>0, \b < 1/2$, and rate tuple $(I_1, \ldots, I_K)$ on
the dominant
face of $\Rr_{[1::K]}(p)$, there exists an $N$ and a monotone permutation
$S^{KN}$
 such that for all $j \in [1::K]$,
\begin{itemize}
 \item[(i)] $$\lvert R_j-I_j\rvert\le\epsilon,$$
\item[(ii)] 
\[
 \frac{|\Gc^{(j)}|}{N} > R_j-\e,
\]
where 
$$\Gc^{(j)} = \{i \in \Sc_j \suchthat I(S_i;Y^N|S^{i-1}) >
1-2^{-N^\b}\}.$$
\end{itemize}
\end{proposition}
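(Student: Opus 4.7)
The plan is to prove Proposition 2 by induction on $K$, with Proposition 1 providing the base case $K=2$. The two required properties decouple cleanly: property (ii) is a polarization statement that holds for \emph{any} monotone permutation, while property (i) requires choosing the monotone permutation carefully to match the target rate tuple on the dominant face.

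For property (ii), I would extend Ar\i kan's two-user polarization argument essentially verbatim. For each index $i \in \Sc_j$, the bit channel $S_i \to (Y^N, S^{i-1})$ is a polarization step applied to user $j$'s input symbol $U_{j,k}$; the only modification relative to the point-to-point setting is that the receiver side information $S^{i-1}$ now additionally contains prefixes of every other user's $U_{j'}$-sequence alongside $Y^N$. Since the standard polarization phenomenon is robust to arbitrary receiver side information, each $I(S_i;Y^N|S^{i-1})$ polarizes to $\{0,1\}$ at the required rate $2^{-N^\beta}$. Combined with the conservation identity $\sum_{i\in\Sc_j} I(S_i;Y^N|S^{i-1}) = N R_j$, this yields $|\Gc^{(j)}|/N > R_j-\epsilon$ by the usual counting/martingale argument from the two-user case.

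For property (i), I would generalize Ar\i kan's one-parameter family $S^{2N}=(U^{i},V^N,U_{i+1}^N)$ to a $(K-1)$-parameter nested family
\[
S^{KN} = \bigl(U_{\pi(1)}^{i_1},\, U_{\pi(2)}^{i_2},\, \ldots,\, U_{\pi(K-1)}^{i_{K-1}},\, U_{\pi(K)}^{N},\, U_{\pi(K-1),\,i_{K-1}+1}^{N},\, \ldots,\, U_{\pi(1),\,i_1+1}^{N}\bigr),
\]
indexed by a permutation $\pi$ of $[1::K]$ and split parameters $(i_1,\ldots,i_{K-1})\in[0::N]^{K-1}$. The extreme choices $i_k\in\{0,N\}$, as $\pi$ varies over all orderings of $[1::K]$, realize the $K!$ corner points of the dominant face, each corresponding to a sequential decoding order. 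Interior points are reached by interpolation: varying any $i_k$ by one shifts the rate tuple by $O(1/N)$, so for $N$ large one can approximate any target $(I_1,\ldots,I_K)$ within $\epsilon$ by choosing $(\pi,i_1,\ldots,i_{K-1})$ appropriately. An alternative, more recursive formulation is to aggregate users $1,\ldots,K-1$ into a single super-user, apply Proposition~1 to the resulting two-user MAC to partition the $KN$ positions into super-user and user-$K$ slots at the correct ratio, and then invoke the inductive hypothesis within the super-user slots to distribute the rates $(I_1,\ldots,I_{K-1})$.

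The main obstacle is verifying that the nested family actually sweeps the entire $(K-1)$-dimensional dominant face, rather than merely a lower-dimensional slice inside it. I would handle this by checking that each of the $K!$ corners of the dominant face is realized by some $(\pi,i_1,\ldots,i_{K-1})$ with $i_k\in\{0,N\}$, and that adjacent corners (differing in a single $i_k$) are joined by continuous paths in the parameter space, after which convexity of the dominant face handles the interior. The aggregation approach sidesteps this geometric verification entirely, but introduces the mild technicality that the super-user's input is non-binary, requiring a $q$-ary adaptation of Proposition~1's monotone chain-rule construction.
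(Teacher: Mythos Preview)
Your induction on $K$ and your handling of (ii) match the paper's outline; the paper likewise defers (ii) to ``standard path scaling.'' The gap is in (i). Your nested family with outer user $\pi(1)=1$ effectively chooses $i_1$ so that the full $(K{-}1)$-user sum constraint $\tfrac1N I(U^N_{[2::K]};Y^N,U_1^{i_1})=I([2::K])$ is tight (this is equivalent to $R_1=I_1$). For the induction to apply you then need $(I_2,\ldots,I_K)$ to lie in the $(K{-}1)$-user MAC region with output $(Y^N,U_1^{i_1})$, i.e., $\tfrac1N I(U_\Jc^N;Y^N,U_1^{i_1})\le I(\Jc)$ for every $\Jc\subseteq[2::K]$. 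But each such quantity is nondecreasing in $i$, and a proper-subset constraint may reach equality at some $i_0<i_1$; at your $i_1$ it is then strictly violated, so the residual target is infeasible and the recursion cannot proceed. Your convexity remark does not repair this: the dominant face being convex says nothing about whether the image of your parameter map covers it. The aggregation alternative has the same difficulty in another guise---once the super-user is split into a prefix (decoded without $U_K^N$) and a suffix (decoded with it), the $K{-}1$ constituent users face two different effective channels, and there is no single $(K{-}1)$-user MAC to which the inductive hypothesis applies.

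The paper's fix is precisely to stop at that earlier $i_0$: the smallest $i$ at which \emph{any} inequality $\tfrac1N I(U_\Jc^N;Y^N,U_1^{i})< I(\Jc)$ is first violated, for whichever $\Jc_0\subseteq[2::K]$ tightens first. This adaptively decomposes the problem into a $\lvert\Jc_0\rvert$-user subproblem with output $(Y,U_1^{i_0})$ and a $(K-\lvert\Jc_0\rvert)$-user subproblem on users $\{1\}\cup([2::K]\setminus\Jc_0)$ with output $(Y,U_1^{i_0},U_{\Jc_0}^N)$; both residual targets lie on the respective dominant faces by construction, so induction applies. The resulting concatenated path is generally not of nested form: e.g., for $K=3$ with $\Jc_0=\{2\}$ one obtains $1^{i_0}2^N3^{j}1^{N-i_0}3^{N-j}$, which your family cannot realize for any $\pi$.
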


\begin{proof}
We prove statement (i) by induction. The case $K=2$ holds by
Proposition~\ref{prop:2-split}. Suppose the statement holds up to $K-1$. We
prove the statement for $K$.

Assume without loss of generality that we
start by decoding $U_1^{i_0}$ for some $i_0 \in [1::N]$. We specify $i_0$ by the
following procedure. Let $i$ increase from $0$ to $N$ and consider the
quantities
\begin{equation}
\label{eq:ui}
 \tfrac1N I(U^N_\Jc;Y^N,U_1^i)
\end{equation}
for each $\Jc \subseteq [2::N]$.
Some observations follow: 
\begin{enumerate}
 \item As $i$ increases, each mutual information term
increases by at most $1/N$ in each step, since the increment is
$I(U_{1i};U^N_\Jc|Y^N,U_1^{i-1})/N \le 1/N$.
 \item There exists an $i$ such that for at least one $\Jc \subseteq [2::K]$,
the following is violated
 \begin{align}
 \tfrac{1}{N}I(U^N_\Jc;Y^N,U_1^i) &< I(\Jc). \label{eq:rj}
\end{align}
\end{enumerate}
To see 2), set $U_1^i = \emptyset$ and $U_1^i = U_1^N$ respectively.
We have
\begin{align*}
 \tfrac1N I(U^N_\Jc;Y^N) &\le I(\Jc) \quad\text{for } \Jc \subset [2::K],\\
 \tfrac1N I(U^N_{[2::K]};Y^N) &\le I([2::K]) \le I(U^N_{[2::K]};Y^N,U_1^N).
\end{align*}
As $i$ increases, the mutual information terms in~\eqref{eq:ui} increase
steadily. Therefore,
there exists an $i$ such that~\eqref{eq:rj} is violated for some $\Jc \subseteq
[2::K]$.  Take the smallest such $i$ as $i_0$.

Suppose at $i = i_0$, the inequality in~\eqref{eq:rj} is violated at $\Jc_0$ for
the first time. As the increment on the left-hand-side of~\eqref{eq:rj} is
bounded by $1/N$, we roughly have
\begin{equation}
 \label{eq:j0}
 \tfrac1N I(U^N_{\Jc_0};Y^N,U_1^{i_0}) = I(\Jc_0).
\end{equation}
This divides the $K$-dimensional rate-approximation into two subproblems of
smaller dimensions. 

{\it Problem 1:} For $\Jc \subseteq \Jc_0$, we have
\[
 \tfrac1N I(U^N_\Jc;Y^N,U_1^{i_0}) < I(\Jc) \quad\text{for all } \Jc \subset
\Jc_0
\]
and
\[
 \tfrac1N I(U^N_{\Jc_0};Y^N,U_1^{i_0}) = I(\Jc_0).
\]
This is a rate-approximation problem for the rate tuple
$(I_j\suchthat j \in \Jc_0)$ on the dominant face of a $|\Jc_0|$-user MAC with
output
$(Y^N,U_1^{i_0})$. 

{\it Problem 2:} For all $\Jc
\supseteq \Jc_0$, we subtract~\eqref{eq:j0} from~\eqref{eq:rj}.
Let $\Tc = \Jc \setminus \Jc_0$,
$\Tc_0 = [2::K]\setminus \Jc_0$, and $I_1'= I_1 -\frac1N I(U_1^{i_0};Y^N)$.
This yields
\begin{align*}
 \tfrac1N I(U^N_\Tc;Y^N,U^N_{\Jc_0},U_1^{i_0}) &\le I(\Tc),\\
 \tfrac1N I(U^N_\Tc,U_{1,i_0+1}^N;Y^N,U^N_{\Jc_0},U_1^{i_0}) &\le
I(\Tc) +I_1',\\ 
\tfrac1N I(U^N_{\Tc_0},U_{1,i_0+1}^N;Y^N,U^N_{\Jc_0},U_1^{i_0}) &= I(\Tc_0) +
I_1'.
\end{align*}
This is a rate-approximation problem for the rate tuple
$(I_1',(I_j\suchthat j \in \Tc_0))$ on the dominant face of a $(K-|\Jc_0|)$-user
MAC with
output $(Y^N,U_1^{i_0},U^N_{\Jc_0})$. 

Note that $1\le|\Jc_0|\le K-1$. Thus
both problems are reduced to a smaller dimension. The final path is obtained by
cascading $1^{i_0}$, $b^{|\Jc_0|N}$ (the solution from problem~1), and
$b^{KN-|\Jc_0|N-i_0}$ (the solution from problem~2). 

The polarization result (ii) is obtained by standard path scaling as
in~\cite{Arikan2012}. This concludes the proof.
\end{proof}

\subsection{Aligning polarized indices for $K$ users}
\label{sec:k-alignment}
Suppose we have two monotone permutations for two $K$-user MACs. To align the
incompatible indices for all users, one can continue the method
in Section~\ref{sec:2-alignment} and sequentially align the incompatible
indices for each $U_j^N, j\in[1::K]$. After alternately aligning $K$ index sets
over $Km$ recursions, the fraction of the incompatible indices for each user is
reduced to $1/2^m$ times the original fraction.
The method for aligning $L$ monotone permutations can be done by recursively
aligning two permutations as in~\cite{HU2013}.

\subsection{Han--Kobayashi Inner bound}
\label{sec:hk}

As an important special case, we show how the scheme above can be used to
achieve the Han--Kobayashi inner bound, the best known inner bound for general
two-user interference channels $P(y_1,y_2|x_1,x_2)$.

\begin{figure}[hbtp]
\footnotesize
\def\svgscale{1.2}
\hspace{-1em}
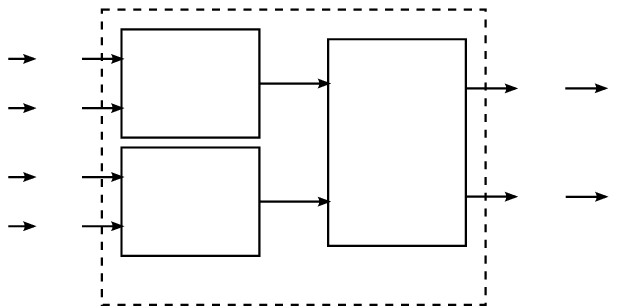
\caption{Han--Kobayashi coding scheme.}
 \label{fig:hk}
\end{figure}

The Han--Kobayashi coding scheme is illustrated in Figure~\ref{fig:hk}. Message
$M_1$ is split into two independent parts $(L_1,L_2)$ and message $M_2$ is
split into two independent parts $(L_3,L_4)$. Message $L_j, j\in[1::4]$, is
carried by codeword $V_j^N(L_j)$. Then the
channel inputs $X_1^N$ and $X_2^N$ are formed using two symbol-by-symbol
mappings
$x_1(v_1,v_2)$ and $x_2(v_3,v_4)$. Receiver~1 uniquely
decodes
$(\Lh_1, \Lh_2, \Lh_3)$ upon receiving $Y_1^N$, while receiver~2 uniquely
decodes
$(\Lh_2, \Lh_3, \Lh_4)$ upon receiving $Y_2^N$. The achievable rate region of
the
Han--Kobayashi coding scheme is given by
\begin{equation}
\label{eq:hk}
 \bigcup_p \text{Proj}_{4\to 2} \bigl(\Rr_1(p) \cap \Rr_2(p)\bigr).
\end{equation}
Here the input distribution is of the form $p =  p(q)$ $ \bigl(\prod_{j=1}^4
p(v_j|q)\bigr)p(x_1|v_1,v_2,q)p(x_2|v_3,v_4,q)P(y_1,y_2|x_1,x_2)$, where
$p(x_1|v_1,v_2,q)$
and $p(x_2|v_3,v_4,q)$ are $\{0,1\}$-valued according to functions
$x_1(v_1,v_2,q)$ and $x_2(v_3,v_4,q)$. The rate region $\Rr_1(p)$ is the set
of rate triples $(R_1',R_2',R_3')$ such that
\begin{align*}
 R_\Jc' \le I(V(\Jc);Y_1^N,V([1::3]\setminus \Jc)|Q)
\end{align*}
for all $\Jc \subseteq [1::3]$.
The rate region $\Rr_2(p)$ is the set of rate triples $(R_2', R_3',R_4')$
such that 
\begin{align*}
 R_\Jc' \le I(V(\Jc);Y_2^N,V([2::4]\setminus \Jc)|Q)
\end{align*}
for all $\Jc \subseteq [2::4]$.
The operator $\text{Proj}_{4\to 2}$ is to apply the Fourier--Motzkin elimination
from the 4-dimensional space $(R_1',R_2',R_3',R_4')$ to the
2-dimensional space $(R_1,R_2)$ by setting $R_1 = R_1' + R_2'$ and $R_2 = R_3' +
R_4'$. 

It is clear from the Han--Kobayashi coding scheme that for each pair of
functions $x_1(v_1,v_2)$ and $x_2(v_3,v_4)$, the message splitting
transforms the the original two-user interference channel into a four-sender
two-receiver interference networks
\[
 P(y^2|v^4) = P(y_1,y_2|x_1(v_1,v_2),x_2(v_3,v_4)),
\]
where sender~$j \in \{1,2,3,4\}$
communicates an independent message $L_j$ at rate $R_j'$, receiver~1 recovers
the subset $\Dc_1 = \{1,2,3\}$ of the four messages, and receiver~2 recovers
the sucset $\Dc_2 = \{2,3,4\}$ of the four messages.

Note from expression~\eqref{eq:hk} that the auxiliary rate region
$(R_1',R_2',R_3',R_4')$ is the intersection of two 3-dimensional MAC regions,
two dimensions of which are in common. Therefore, one just needs to find two
monotone permutations that achieves any target point in the two MACs
respectively and sequentially align the two codes shared in common  using
the method
in Section~\ref{sec:2-alignment}.


\subsection{Superposition Coding for Broadcast Channels}
\label{sec:bc}
The method for interference networks also implies the achievability of the
superposition coding inner bound for general broadcast channels. As the simplest
example, consider a two-user broadcast channel $P(y_1,y_2|x)$, where the sender
wishes to communicate message $M_1$ to receiver~1 and message $M_2$ to
receiver~2. 

\begin{figure}[hbtp]
\footnotesize
\def\svgscale{1.2}

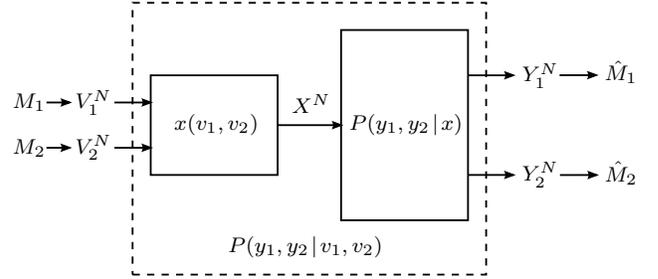
\caption{Cover's homogeneous superposition coding.}
 \label{fig:bc}
\end{figure}

Cover's homogeneous superposition coding~\cite{Cover1972} is illustrated in
Figure~\ref{fig:bc} . Two auxiliary sequences of codewords $V_1^N(M_1)$ and
$V_2^N(M_2)$ are generated according to independent distribution $p(v_1)p(v_2)$.
Then the channel input $X^N$ is formed through the symbol-by-symbol mapping $
x(v_1, v_2)$. This transforms the broadcast
channel into a two-sender two-receiver interference networks 
\[
 P(y_1,y_2|v_1,v_2) = P(y_1,y_2|x(v_1,v_2)),
\]
where sender~$j \in \{1,2\}$ communicates an independent message $M_j$ and
receiver~$j \in \{1,2\}$ recovers a subset $\Dc_j = \{j\}$ of the messages.
Clearly, this is another special case of the interference networks. 

It is worth mention that compared to Bergmans's heterogeneous superposition
coding~\cite{Bergmans1973}, where the codeword $X^N(M_1,M_2)$ is
generated conditioned on the codeword $V^N(M_1)$ according to distribution
$p(v)p(x|v)$, Cover's homogeneous superposition coding 
achieves in general a strictly larger rate region in the two-user
broadcast channels under optimal decoding~\cite{WSBK2013}. The rate region
achievable by Cover's superposition encoding  and optimal
decoding is~\cite{WSBK2013}
\[
 \bigcup_p \bigcup_{i=1}^4 \bigl(\Rr_{1i}(p) \cap \Rr_{2i}(p)\bigr),
\]
where the distribution is of the form $p =p(v_1)p(v_2)$ $p(x|v_1,v_2)
P(y_1,y_2|x)$ with $\{0,1\}$-valued $p(x|v_1,v_2)$ and 
$\Rr_{1i}(p)\cap \Rr_{2i}(p)$ corresponds to the rate region when the
decoders are required to uniquely recover the following message sets
\begin{align*}
i = 1 &\suchthat \Ac_1 = \{1\}, \Ac_2 = \{2\};\\
i = 2 &\suchthat \Ac_1 = \{1,2\}, \Ac_2 = \{2\};\\
i = 3 &\suchthat \Ac_1 = \{1\}, \Ac_2 = \{1,2\};\\
i = 4 &\suchthat \Ac_1 = \{1,2\}, \Ac_2 = \{1,2\}.
\end{align*}
For example, $\Rr_{13}(p) \cap \Rr_{23}(p)$ is the set of $(R_1,R_2)$ such
that
\begin{align*}
 R_1 &< I(V_1;Y_1),\\
 R_1 &< I(V_1;Y_2,V_2),\\
 R_2 &< I(V_2;Y_2,V_1),\\
 R_1 + R_2 &< I(V_1,V_2;Y_2).
\end{align*}
To achieve arbitrary point here, one can first find a good
point-in-point code for $\Rr_{13}(p)$ and a monotone permutation for the MAC
region $\Rr_{23}(p)$. Then apply method in~\ref{sec:alignment} to align the
incompatible indices in the code for $V_1^N$. This achieves any point in the
rate region $\Rr_{13}(p)\cap \Rr_{23}(p)$. Similarly for each decoding set, one
can design a corresponding 
polar coding scheme based on the method above. Therefore, the proposed polar
coding scheme achieves the optimal rate region given Cover's superposition
encoding. The generalization to $L$-user broadcast channels can be done
similarly.

As a side remark, the independence between $V_1$ and $V_2$ in Cover's
superposition coding is important for transforming the broadcast channel into a
two-user interference channel. For general correlated $(V_1,V_2) \sim
p(v_1,v_2)$ as in Marton coding for broadcast channels, one needs different
techniques. A method for Marton coding as well as an alternative
polar coding scheme for Bergmans's superposition coding can be found
in~\cite{MHSU2014}.

\section{Discussion}
We have shown a polar coding method for the general interference networks that
achieves the optimal rate region when the encoding is restricted to random
coding ensembles with superposition coding and time sharing~\cite{BEK2012}. As
special cases, the method achieves the capacity region of the compound MAC, the
Han--Kobayashi inner bound for two-user
interference channels, and the superposition coding inner bound for broadcast
channels. 

One drawback of the current method is the long blocklength needed for large
scale
networks. When there are $L$ receivers in the networks, one needs to do $L-1$
alignments to resolve the incompatible indices in $L$ permutations, which
makes the blocklength scale with the network size. 

One crucial component in the current method is Ar\i kan's `polar splitting' for
MAC. It would be interesting to compare it to regular rate splitting for MAC as
in~\cite{GRUW2001}. Both schemes achieve optimal
performance in MAC. However, for interference channels, the former, together
with the alignment method,
achieves the best known rate region while the latter is strictly suboptimal
information theoretically~\cite{WSK2014}.

In a parallel study~\cite{WSK2014}, a successive decoding based random coding
scheme is presented, which also achieves Han--Kobayashi inner bound. Some
similarities and connections can be found in the way
the two schemes resolve the incompatibility of the two MACs.

\section*{Acknowledgement}
L.~Wang would like to thank Young-Han Kim for suggesting the problem
and the guidance throughout.  
Her work is supported by the 2013
Qualcomm Innovation Fellowship, the SAMSUNG Global Research Outreach
program, and the National Science Foundation under grant CCF-1116139.
E.~\c Sa\c so\u glu's work is supported by the Swiss National Science
Foundation under grant PBELP2\_137726.

\bibliographystyle{IEEEtran}
\bibliography{polaric}

\end{document}